\newtheorem{theorem}{Theorem}[section]
\begin{document}
\title{A Brief Review of Helmholtz Conditions}
\author{Kushagra Nigam}
\email{f2010174@goa.bits-pilani.ac.in}
\author{Kinjal Banerjee}
\email{kinjalb@gmail.com}
\affiliation{Department of Physics, BITS Pilani K.K Birla Goa Campus, N.H. 17B Zuarinagar, Goa 403726, India.}
\affiliation{Department of Physics, BITS Pilani K.K Birla Goa Campus, N.H. 17B Zuarinagar, Goa 403726, India.}
\date{\today}

\begin{abstract}
It is well known that the equations of motion obtained from Newtons second law of motion can be obtained 
from a Lagrangian via the Euler-Lagrangian formulation if and only if the equations of motion satisfy the Helmholtz 
conditions. In this pedagogical article we give a simple proof of the above statement and show its application to simple 
mechanical and dissipative systems. 
\end{abstract}
\maketitle
\vspace{-5ex}
\tolerance=1 
\emergencystretch=\maxdimen
\hyphenpenalty=10000
\hbadness=10000
\section{Introduction}
The main aim of classical mechanics is to be able to predict with certainty, the final state of a system, i.e. to
predict its position and velocity at some time $t$, if we are given its position and  velocity at some initial time
$t_0$ and the set of forces acting on the system. In other words, the goal is to determine the equations of motion and
then obtain the solutions. However the starting point of quantum theory is usually the Hamiltonian or Lagrangian. 
So, given the equations of motion of a classical system, it is first necessary to {\textit{invert}} them to obtain a 
Lagrangian or a Hamiltonian before proceeding to quantization. A natural question which arises here is that: Is it always possible 
to find a Lagrangian corresponding to a given set of equations of motion? More specifically, we need to know whether a given set of 
second order ordinary differential equations governing the dynamics of a physical system can be obtained as Euler-Lagrange equations 
of some Lagrangian function.

The necessary and sufficient conditions for this to be so are known as Helmholtz conditions. These conditions have been extensively 
studied in mathematical literature by \cite{davis1,davis2}, \cite{jessedouglas} among others. Recently, various attempts have been made to 
find integrating factors which when multiplied to equation of motion lead to the existence of a Lagrangian
\cite{leonardo}. A brief discussion of generalized Helmholtz conditions can also be found in \cite{crampin}.

A standard undergraduate course on Classical Mechanics introduces the concept of Lagrangian and Action integral. 
However, very often, the question of the existence of a Lagrangian function is completely ignored. 
Keeping these things in mind, our goal is to provide a pedagogical account of the problem. 
The proof, though slightly involved and lengthy, uses simple multivariate calculus which can be easily understood by undergraduate students. 
The conditions themselves put stringent restrictions on what type of mechanical systems can be studied using 
the action formalism. One of the places where we can come across non-Lagrangian systems is while studying phenomena like
dissipation. In this paper, we show the application of the Helmholtz conditions through an example of determining the
Lagrangian of a damped harmonic oscillator.  

This paper is organized as follows: 
In Section (\ref{proof}) we give a proof of necessity and sufficiency of Helmholtz conditions in $n$ dimensions. 
Section (\ref{examples}) contains some examples to illustrate what these conditions mean for simple mechanical systems. 
In particular we discuss why the equation of motion of one dimensional damped oscillator cannot be obtained directly from a Lagrangian and how 
the situation can be remedied by multiplying the equation of motion by an \textit{integrating factor} 
often referred to as Jacobi's last multiplier\cite{leonardo}.  We conclude with a brief discussion in section (\ref{conclusions}). 
Note that we will \textit{not} be using Einstein summation convention. Throughout the paper we will explicitly indicate summations. 
Unless otherwise stated the Latin index ($i,j,k \dots$) shall run from $1$ to $n$.


\section{Helmholtz Conditions} \label{proof}
Suppose a system with $n$ degrees of freedom be described by $n$ second order differential equations.
Let us denote the second order differential equations as 
\begin{equation} \label{forceeqn}
F_i(t,x_j,x'_j,x''_j)=0 
\end{equation}
where prime $(')$ denotes derivative with respect to time $t$.

Helmholtz conditions form a set necessary and sufficient conditions to be satisfied by (\ref{forceeqn}) such that they are given by Euler Lagrange equation 
corresponding to a Lagrangian $L(t,x_j,x'_j)$. These conditions are given as,
\begin{align} 
\label{helm1}
\frac{\partial F_i}{\partial x''_j}=&\frac{\partial F_j}{\partial x''_i} \hspace{12em}\mbox{\bf{H1}}\\
\label{helm2}
\frac{\partial F_i}{\partial x_j}-\frac{\partial F_j}{\partial x_i}=
&\frac{1}{2}\frac{d}{dt}\left[\frac{\partial F_i}{\partial x'_j}-\frac{\partial F_j}{\partial x'_i}\right]
\hspace{6em}\mbox{\bf{H2}} \\
\label{helm3}
\frac{\partial F_i}{\partial x'_j}+\frac{\partial F_j}{\partial x'_i}=& 2\frac{d}{dt}\left[\frac{\partial F_j}{\partial x''_i}\right] 
\hspace{9em}\mbox{\bf{H3}}
\end{align}
A proof for the above statement clearly requires us to show two things, necessity and sufficiency.


\subsection{Necessity}

In order to show the necessity of these conditions, let us consider a Lagrangian $L(t,x_j,x'_j)$ whose Euler-Lagrange's 
equations of motion are denoted by $E_i(t,x_j,x'_j,x''_j)$. Therefore 
\begin{equation} \label{eulerlag}
E_i\equiv\frac{d}{dt}\left(\frac{\partial L}{\partial x'_i}\right) - \frac{\partial L}{\partial x_i}=0 
\end{equation}
where,
\begin{equation}\label{defddt}
\frac{d}{dt}=\frac{\partial }{\partial t} +\sum\limits_{j=1}^{n}\frac{\partial }{\partial x_j}x'_j + \sum\limits_{j=1}^{n}\frac{\partial }{\partial x'_j}x''_j 
\end{equation}
The variations of $E_i$ under arbitrary variation produced along one of the solution curves is 
given as
\begin{equation}\label{variation}
\delta E_i=\sum\limits_{k=1}^{n}\left[ 
\frac{\partial E_i}{\partial x_k}\delta x_k + \frac{\partial E_i}{\partial x'_k}\delta x'_k + \frac{\partial E_i}{\partial x''_k}\delta x''_k  
\right]
\end{equation}
The logic of our proof for the necessity of Helmholtz conditions will be as follows: We shall show that any $E_i$ which
are Euler-Lagrange's equations of motion of some Lagrangian $L$ necessarily satisfy the relations implied by the
Helmholtz conditions ({\bf{H1-H3}}). Therefore if the equations of motion $F_i$ have to be obtainable from some Lagrangian, they also need to
satisfy the same conditions. Note that this is just a necessary condition, not sufficient.

Expanding the equation (\ref{variation}) using (\ref{eulerlag}) and (\ref{defddt}) we get
\begin{align} \label{eulervariation}
\delta E_i=&\sum\limits_{k,j=1}^{n}\left\{\frac{\partial ^3 L}{\partial t \partial x_k \partial x'_i} +
			\frac{\partial ^3 L}{\partial x_k \partial x_j \partial x'_i}x'_j +
			\frac{\partial ^3 L}{\partial x_k \partial x'_j \partial x'_i}x''_j - 
                         \frac{\partial ^2 L}{\partial x_k \partial x_i} \right\}\delta x_k + 						
                                                                                                           \nonumber\\						
		    &\sum\limits_{k,j=1}^{n}\left\{\frac{\partial ^3 L}{\partial t \partial x'_k \partial x'_i} +
			\frac{\partial ^3 L}{\partial x'_k \partial x_j \partial x'_i}x'_j +
			\frac{\partial ^3 L}{\partial x'_k \partial x'_j \partial x'_i}x''_j + 
			\frac{\partial ^2 L}{\partial x_k \partial x'_i}  - 
                       \frac{\partial ^2 L}{\partial x'_k \partial x_i} \right\}\delta x'_k + 			\nonumber\\ 	
		    &\sum\limits_{k,j=1}^{n}\left\{\frac{\partial ^2 L}{\partial x'_k \partial x'_i} \right\}\delta x''_k    
\end{align}
where we have used the fact that the Lagrangian $L$ is a function of the positions and velocities and time only and that
they can all be considered independent of each other. In other words terms like $\frac{\partial x'_j}{\partial x_k} = 0$
and $\frac{\partial x'_j}{\partial x'_k} = \delta_{kj}$ and $\frac{\partial x''_j}{\partial x''_k} =\delta_{kj}$.

The first Helmholtz condition is easy to verify. Consider the coefficient of $\delta x''_k$ in (\ref{eulervariation}). We can see that
\begin{eqnarray}\label{H1necessary}
\frac{\partial E_i}{\partial x''_k} = \frac{\partial ^2 L}{\partial x'_k \partial x'_i} ~ ~ ~ ~ ~
&\mbox{and}& ~ ~ ~ ~ ~ \frac{\partial E_k}{\partial x''_i} = \frac{\partial ^2 L}{\partial x'_i \partial x'_k}
\nonumber\\
\mbox{since partial derivatives commute} ~ ~ ~ ~ ~ \Rightarrow ~ ~ ~  \frac{\partial E_i}{\partial x''_k} &=&
\frac{\partial E_k}{\partial x''_i} 
\end{eqnarray}
which is same as {\bf{H1}} (eq.\ref{helm1}).

To derive {\bf{H2}}, consider (\ref{eulervariation}) again. It can be easily seen that
\begin{equation}\label{H2intermed}
\frac{d}{dt}\left[\frac{\partial E_i}{\partial x'_k}-\frac{\partial E_k}{\partial x'_i}\right]= 
2\frac{d}{dt}\left[\frac{\partial ^2 L}{\partial x_k \partial x'_i}- \frac{\partial ^2 L}{\partial x_i \partial x'_k}\right] 
\end{equation}
since only the last two terms in the coefficient of $\delta x'_k$ do not vanish under antisymmetrization due to interchange of $i\longleftrightarrow k$. Similarly, since only the last term of the coefficient of $\delta x_k$ vanishes
under the same conditions, a simple calculation using the definition (\ref{defddt}) to expand the RHS of (\ref{H2intermed}) gives
\begin{eqnarray}\label{H2necessary}
\frac{d}{dt}\left[\frac{\partial E_i}{\partial x'_k}-\frac{\partial E_k}{\partial x'_i}\right]= 
2\left[\frac{\partial E_i}{\partial x_k}-\frac{\partial E_k}{\partial x_i}\right] 
\end{eqnarray}
The third condition  {\bf{H3}} can be showed by a similar calculation.
\begin{align}\label{H3necessary} 
\frac{d}{dt}\left[\frac{\partial E_i}{\partial x''_k}\right]=&\frac{d}{dt}\left[\frac{\partial ^2 L}{\partial x'_k \partial x'_i} \right] \nonumber \\
=& \frac{1}{2}\left[\frac{\partial E_i}{\partial x'_k} +\frac{\partial E_k}{\partial x'_i}\right]
\end{align}
where now the only the last two terms in the coefficient of $\delta x'_k$ vanish under symmetrization due to
interchange of $i\longleftrightarrow k$.

To summarize, we have proved that the three Helmholtz conditions are relations between the equations of motion obtained
from Euler Lagrange equation of any Lagrangian. It can be seen that these conditions incorporate all possible relations between  
the coefficients in eq (\ref{eulervariation}). Hence if any second order differential equation is an Euler Lagrange's 
equation of some Lagrangian it necessarily needs to satisfy the three Helmholtz conditions. Note that the proof given above works
for all dimensions.

\subsection{Sufficiency}

To prove sufficiency, we need to show that given the equations of motion satisfying the Helmholtz conditions, 
there always exists a Lagrangian for the system. The sufficiency conditions, in some sense are \textit{integrability} conditions 
for the equations of motion. 

Our plan is the following: If $F_i$ are the equations of motion, the Helmholtz conditions puts some restrictions on the form of $F_i$ 
which we determine at first. Then we assume that there exists a Lagrangian $L$ whose Euler-Lagrange's equations of motion are the set $F_i$. 
The restrictions on $F_i$ from the Helmholtz conditions puts restrictions on the form of $L$. That is what we determine next and express 
the hypothetical Lagrangian in terms of some functions we call $G_0, H_i, H_0$. If and only if these functions exist the Lagrangian $L$ will exist. 
We prove the existence of these functions thereby guaranteeing that there exists a Lagrangian $L$ whose Euler-Lagrange's equations of motion are 
given by $F_i$.

As mentioned above, we first show that the Helmholtz conditions imply a specific relations among the set of equations of
motion $F_i$. Let us expand the equation \mbox{\bf(H3)}
\begin{equation} \label{forcehelm} 
\frac{\partial F_i}{\partial x'_j}+\frac{\partial F_j}{\partial x'_i}= 2\left[\frac{\partial^2 F_j}{\partial t\partial x''_i} + \frac{\partial^2 F_j}{\partial x_k\partial x''_i}x'_k\ +  \frac{\partial^2 F_j}{\partial x'_k\partial x''_i}x''_k\ +  \frac{\partial^2 F_j}{\partial x''_k\partial x''_i}x'''_k\right] 
\end{equation}
We observe that left hand side of equation (\ref{forcehelm}) is independent of $x'''_j$ therefore the coefficients of $x'''_j$  on the right hand 
side must identically vanish. Thus, $F_i(t,x_j,x'_j,x''_j)$ must be linear in $x''_j$ and must take the following form
\begin{equation} \label{functionform} 
F_i\equiv P_i(t,x_k,x_k') + \sum_j Q_{ij}(t,x_k,x_k')x''_j
\end{equation}
From \mbox{\bf(H1)} we can clearly see that $Q_{ij}=Q_{ji}$.

Putting $F_i$ in \mbox{\bf(H2)}, we see that the following conditions must be identically satisfied:
\begin{enumerate}[(1)]
\item Coefficients of $x'''_j$ in right hand side of \mbox{\bf(H2)} must vanish. This implies 
\begin{align} 
\label{Qpartialij} 
\frac{\partial Q_{ik}}{\partial x'_j}=\frac{\partial Q_{jk}}{\partial x'_i}
\end{align}
\item The Coefficients of $x''_k$ must be equal
\begin{align} 
\label{coeffequal} 
\frac{\partial Q_{ik}}{\partial x_j}-\frac{\partial Q_{jk}}{\partial x_i}=\frac{1}{2}\left[\frac{\partial P_{i}}{\partial x'_j\partial x'_k}-\frac{\partial P_{j}}{\partial x'_i\partial x'_k}\right]
\end{align}
\item The remaining part of equation is given as
\begin{align} 
\label{Pcoefficient} 
\frac{\partial P_{i}}{\partial x_j}-\frac{\partial P_{j}}{\partial x_i}=\frac{1}{2}\left[\frac{\partial^2 P_{i}}{\partial t\partial x'_j}-\frac{\partial^2 P_{j}}{\partial t\partial x'_i}+\sum\limits_{k=1}^{n}\left\{\frac{\partial^2 P_i}{\partial x_k\partial x'_j}-\frac{\partial^2 P_{j}}{\partial x_k\partial x'_i}\right\}x'_k\right]
\end{align}
\end{enumerate}
On differentiating (\ref{Pcoefficient}) with respect to $x'_l$ we get
\begin{align} 
\label{cycle} 
\frac{\partial^2 P_{i}}{\partial x_j\partial x'_l}-\frac{\partial P^2_{j}}{\partial x_i\partial x'_l}=\frac{1}{2}\left[\frac{\partial^3 P_{i}}{\partial t\partial x'_j\partial x'_l}-\frac{\partial^3 P_{j}}{\partial t\partial x'_i\partial x'_l}+\sum\limits_{k=1}^{n}\left\{\frac{\partial^3 P_i}{\partial x_k\partial x'_j\partial x'_l}-\frac{\partial^3 P_{j}}{\partial x_k\partial x'_i\partial x'_l}\right\}x'_k+\frac{\partial^2 P_i}{\partial x_l\partial x'_j}-\frac{\partial^2 P_{j}}{\partial x_l\partial x'_i}\right]
\end{align}
Interchanging $i\longleftrightarrow j\longleftrightarrow l$ in cyclic order in (\ref{cycle}) and adding the three equations so obtained, we have
\begin{align}\label{cyclecondt}
\frac{\partial^2 P_i}{\partial x_l\partial x'_j}-\frac{\partial^2 P_{j}}{\partial x_l\partial x'_i}=\frac{\partial^2 P_i}{\partial x_j\partial x'_l}-\frac{\partial^2 P_{l}}{\partial x_j\partial x'_i}+\frac{\partial^2 P_l}{\partial x_i\partial x'_j}-\frac{\partial^2 P_{j}}{\partial x_i\partial x'_l}
\end{align}  
Replacing $l$ with $k$ in (\ref{cyclecondt}), (\ref{Pcoefficient}) can now be written as
\begin{align} 
\label{cyclefinal} 
\frac{\partial P_{j}}{\partial x_i}-\frac{\partial P_{i}}{\partial x_j}+\frac{1}{2}\left[\frac{\partial^2
P_{i}}{\partial t\partial x'_j}-\frac{\partial^2 P_{j}}{\partial t\partial
x'_i}+\sum\limits_{k=1}^{n}\left\{\frac{\partial^2 P_i}{\partial x_j\partial x'_k}-\frac{\partial^2 P_{k}}{\partial
x_j\partial x'_i}+\frac{\partial^2 P_k}{\partial x_i\partial x'_j}-\frac{\partial^2 P_{j}}{\partial x_i\partial
x'_k}\right\}x'_k\right]=0
\end{align}
Having established these relations, let us now assume that there exists a Lagrangian function $L(t,x_j,x'_j)$ such that 
the Euler-Lagrange's equations of motion (\ref{eulerlag}) are of the form given by (\ref{functionform}). That is,
\begin{equation} \label{comparison}
\sum_j \left( \frac{\partial ^2 L}{\partial x'_i \partial x'_j}x''_j + \frac{\partial ^2 L}{\partial x'_i \partial
x_j}x'_j + \frac{\partial ^2 L}{\partial x'_i \partial t} - \frac{\partial L}{\partial x_i}\right) = 
\sum_j Q_{ij}(t,x_k,x_k')x''_j + P_i(t,x_k,x_k')
\end{equation}
We need to show that such a Lagrangian exists. That is what we will do subsequently, using the relations derived above. 
Comparing the coefficients of $x''_j$  on both sides of eq. (\ref{comparison}), we have
\begin{eqnarray}
\frac{\partial ^2 L}{\partial x'_i \partial x'_i} &=& Q_{ii} \label{Qii} \\
\frac{\partial ^2 L}{\partial x'_i \partial x'_j} &=& Q_{ij} ~ ~; ~ ~ i \neq j \label{Qij}
\end{eqnarray}
where we have written the two possible types of terms separately. Integrating eqs. (\ref{Qii}) and (\ref{Qij}) we get,
respectively
\begin{eqnarray}
\frac{\partial L}{\partial x'_i} &=& \int\limits_{x'_{o_i}}^{x'_i}{Q_{ii}}{dx'_i} + h_{ii}(t,x_i,x_j,x'_j) 
~ ~ ; ~ ~ i \neq j  \label{partialQii} \\
\frac{\partial L}{\partial x'_i} &=& \int\limits_{x'_{o_j}}^{x'_j}{Q_{ij}}{dx'_j} + 
h_{ij}(t,x_i,x'_i,x_j,x_k,x'_k)  
~ ~ ; ~ ~ i \neq j \neq k \label{partialQij}
\end{eqnarray}
where, the functions $h_{ii},h_{ij}$ are constants of integration while the points $x'_{o_i}$ and $x'_{o_j}$ are arbitrary.
Note that for a $n$ dimensional system we will have $n$ independent equations of the type (\ref{partialQii}) and $n(n-1)$
independent equations of the type (\ref{partialQij}). That gives us $n^2$ functions $h_{ij}$ in the above equations.
 
Partially differentiating eq (\ref{partialQii}) with respect to $x'_j$ we get 
\begin{eqnarray}
\frac{\partial^2 L}{\partial x'_j \partial x'_i } &=& \int\limits_{x'_{o_i}}^{x'_i} {\frac{\partial Q_{ii}}{\partial
x'_j}{dx'_i}} + \frac{\partial h_{ii}}{\partial x'_j} \nonumber \\
&=&  \int\limits_{x'_{o_i}}^{x'_i} {\frac{\partial Q_{ji}}{\partial
x'_i}{dx'_i}} + \frac{\partial h_{ii}}{\partial x'_j} ~ ~ ~  ~ ~ ~ ~\mbox{using eq (\ref{Qpartialij})} \nonumber \\
&=& Q_{ij} + \frac{\partial h_{ii}}{\partial x'_j}
\end{eqnarray}
Comparing this with eq (\ref{Qij}), we see that $\frac{\partial h_{ii}}{\partial x'_j} = 0$. This can be done for all $j$
showing that $h_{ii}$ is independent of all the velocities.

Similarly partially differentiating eq (\ref{partialQij}) with respect to $x'_i$ and again using eq (\ref{Qpartialij}) 
and comparing with eq (\ref{Qii}), we can show that  $\frac{\partial h_{ij}}{\partial x'_i} = 0$. However $h_{ij}$ can still
be a function of $x'_k$ where $k \neq i \neq j$. Then eq (\ref{partialQij}) becomes
\begin{eqnarray}
\frac{\partial L}{\partial x'_i} &=& \int\limits_{x'_{o_j}}^{x'_j}{Q_{ij}}{dx'_j} + h_{ij} (t,x_i,x_j,x_k,x'_k)  
~ ~ ; ~ ~ i\neq j\neq k
\label{partialQij2ndpass}
\end{eqnarray}
However partially differentiating with respect to $x'_k$ and this time using eq (\ref{Qpartialij}) and repeating the above
procedure, we can see that $\frac{\partial h_{ij}}{\partial x'_k} = 0$. Therefore $h_{ij}$ is also independent of all the
velocities.

Next, note that for each $\frac{\partial L}{ \partial x'_i}$, we have $n$ equations corresponding to each $j$ taking values from $1$ to $n$.
\begin{eqnarray} 
&& \frac{\partial L}{ \partial x'_i} =
\int\limits_{x'_{o_j}}^{x'_j}{\frac{\partial^2 L}{\partial x'_j \partial x'_i }} dx'_j + h_{ij}(t,x)  ~ ~ ~;~ ~
~\mbox{where, x stands for all generalized coordinates} \label{delLdelxdoti}
\end{eqnarray}
Adding all the equations for each $ \frac{\partial L}{ \partial x'_i}$ and dividing by $n$,
\begin{eqnarray}\label{intarray1}
\frac{\partial L}{ \partial x'_i} &&=
\sum\limits_{j=1}^{n}\frac{1}{n}\left[ \int\limits_{x'_{o_j}}^{x'_j}{\frac{\partial^2 L}{\partial x'_j \partial x'_i }} dx'_j + h_{ij}(t,x)\right]\nonumber \\
&&=\sum\limits_{j=1}^{n}\frac{1}{n}\left[\int\limits_{x'_{o_j}}^{x'_j}{Q_{ij}} dx'_j \right] +
G_i(t,x)  ~~~~~;~~~~\mbox{where, $G_i(t,x) = \frac{1}{n}\sum\limits_{j=1}^{n}h_{ij}(t,x)$}\nonumber \\
&&=\frac{1}{n}\left[\int\limits_{x'_{o_1}}^{x'_1}\int\limits_{x'_{o_2}}^{x'_2}\dots\int\limits_{x'_{o_n}}^{x'_n}\sum\limits_{j=1}^{n}{{Q_{ij} dx'_j}}\right]
+ G_i(t,x) 
\end{eqnarray}
From eq (\ref{Qpartialij}) we have
$\frac{\partial Q_{ik}}{\partial x'_j}=\frac{\partial Q_{ij}}{\partial x'_k}$. Hence $\sum\limits_{j=1}^{n}Q_{ij}dx'_j$ 
form an exact differential whose integration is path independent and is guaranteed to exist 
(see Appendix {(\ref{appendixa})}). To simplify our notation we will subsequently denote 
\begin{eqnarray} 
R_i(t,x,x') := 
\frac{1}{n}\left[\int\limits_{x'_{o_1}}^{x'_1}\int\limits_{x'_{o_2}}^{x'_2}\dots\int\limits_{x'_{o_n}}^{x'_n}\sum\limits_{j=1}^{n}{{Q_{ij} dx'_j}}\right]
\label{Rdefn}
\end{eqnarray}

Then the n equations in eq (\ref{delLdelxdoti}) can be rewritten as 
\begin{equation} \label{redintarray1}
\frac{\partial L}{\partial x'_i}=R_i(t,x,x') + G_i(t,x) 
\end{equation}

We can follow a similar path to perform one more integral. Again adding up the $n$ equations in (\ref{redintarray1}) and
diving by $n$, we get
\begin{eqnarray}\label{intarray2}
 L =&& \frac{1}{n}\left[\sum\limits_{i=1}^{n}\int\limits_{x'_{o_i}}^{x'_i}{\frac{\partial L}{ \partial x'_i}}dx'_i\right] \nonumber\\
 =&& \frac{1}{n}\left[\int\limits_{x'_{o_1}}^{x'_1}\int\limits_{x'_{o_2}}^{x'_2}\dots\int\limits_{x'_{o_n}}^{x'_n}\sum\limits_{i=1}^{n}\frac{\partial L}{ \partial x'_i}dx'_i\right] \nonumber\\
 =&& \frac{1}{n} \left[\int\limits_{x'_{o_1}}^{x'_1}\int\limits_{x'_{o_2}}^{x'_2}\dots\int\limits_{x'_{o_n}}^{x'_n}\sum\limits_{i=1}^{n}R_i(t,x,x')dx'_i\right]+\frac{1}{n}\sum\limits_{i=1}^{n}G_i(t,x)x'_i + H_0(t,x) 
\end{eqnarray}
where, $H_0(t,x)$ is a constant of integration. 

Using  eqs (\ref{Qpartialij}) we can verify that
\begin{align} \label{exactness}
\frac{\partial R_i}{\partial x'_j}= \frac{\partial R_j}{\partial
x'_i}
\end{align}
thus proving that $\sum\limits_{i=1}^{n}R_i(t,x,x')dx'_i$ is an exact differential whose solution exists and is path
independent (see Appendix {(\ref{appendixa})}). 

Again, for simplicity, we shall use the notation
\begin{eqnarray}
G_0(t,x,x') &:=&
\frac{1}{n}
\left[\int\limits_{x'_{o_1}}^{x'_1}\int\limits_{x'_{o_2}}^{x'_2}\dots\int\limits_{x'_{o_n}}^{x'_n}\sum\limits_{i=1}^{n}R_i(t,x,x')dx'_i\right]
\nonumber \\
H_i(t,x') &:=& \frac{1}{n}G_i(t,x') \label{GHdefn} 
\end{eqnarray}

The we can write our hypothetical Lagrangian as
\begin{eqnarray}
L = G_0(t,x,x') + \sum\limits_{i=1}^{n} H_i (t,x)x'_i + H_0(t,x)
\label{L2}
\end{eqnarray}
Note that, what we have proved above is that  $G_0(t,x,x')$ will exist provided the Helmholtz conditions are satisfied. 
So we have reduced our problem of proving the existence of Lagrangian to proving the existence of functions $H_i$ and $H_0$.

We notice that these functions must satisfy the remaining part of equation (\ref{comparison}) that is,
\begin{align} 
\sum\limits_{j=1}^{n}\frac{\partial ^2 L}{\partial x'_i \partial x_j}x'_j + \frac{\partial ^2 L}{\partial x'_i \partial t} - \frac{\partial L}{\partial x_i}=& P_i(t,x_k,x_k') \\
\intertext{which gives us n differential equations,}
\label{partial}
\frac{\partial H_i}{\partial t}+\sum\limits_{j=1}^{n}\left[\frac{\partial H_i}{\partial x_j}-\frac{\partial H_j}{\partial x_i}\right]x'_j - \frac{\partial H_0}{\partial x_i} 
=& 
P_i + \frac{\partial G_0}{\partial x_i} - \left[\sum\limits_{j=1}^{n}\frac{\partial ^2 G_0}{\partial x_j \partial x'_i}x'_j\right] -\frac{\partial ^2 G_0}{\partial t \partial x'_i}
\end{align}
Note that, from the equations of motion (\ref{functionform}) we already know the functions $P_i$ and $Q_{ij}$. As seen
above, given the functions $Q_{ij}$, we can always find the functions $G_0$. Hence the RHS of the above equation is given
completely in terms of known functions. Since the LHS of the above equation (\ref{partial}) must be identically equal to its RHS, 
we can verify that the RHS is independent of $x'_i$ and linearly dependent on $x'_j$ for all $j \neq i$ simply from the
definitions (\ref{GHdefn}). 

Partially differentiating eq (\ref{partial}) with respect to $x'_k$ we get
\begin{align} 
\label{a}
\frac{\partial H_i}{\partial x_k}-\frac{\partial H_k}{\partial x_i}= \frac{\partial P_i}{\partial x'_k} + \frac{\partial^2 G_0}{\partial x'_k\partial x_i} - \left[\sum\limits_{j=1}^{n}\frac{\partial ^3 G_0}{\partial x_j \partial x'_i \partial x'_k}x'_j\right]-\frac{\partial ^2 G_0}{\partial x_k \partial x'_i} -\frac{\partial ^3 G_0}{\partial t \partial x'_i \partial x'_k}
\end{align}  
Interchanging $i\longleftrightarrow k$ in (\ref{a}) and subtracting, we have
\begin{align} 
\label{b}
\frac{\partial H_i}{\partial x_k}-\frac{\partial H_k}{\partial x_i}= \frac{1}{2}\left[\frac{\partial P_i}{\partial x'_k}-\frac{\partial P_k}{\partial x'_i}\right] + \frac{\partial^2 G_0}{\partial x'_k\partial x_i} -\frac{\partial ^2 G_0}{\partial x_k \partial x'_i} \end{align}  
giving us a first order partial differential equation for pairs of $H_i$.

On substituting (\ref{b}) in (\ref{partial}), we have,
\begin{align}
\label{c}
\frac{\partial H_i}{\partial t}- \frac{\partial H_0}{\partial x_i} =& P_i + \frac{\partial G_0}{\partial x_i} - \sum\limits_{j=1}^{n}\left[\frac{\partial ^2 G_0}{\partial x_i \partial x'_j}x'_j\right] -\frac{\partial ^2 G_0}{\partial t \partial x'_i} + \sum\limits_{j=1}^{n}\frac{1}{2}\left[\frac{\partial P_j}{\partial x'_i} -\frac{\partial P_i}{\partial x'_j}\right]x'_j
\end{align}  
giving us a first order partial differential equation involving $H_0$ and $H_i$.

We have $\frac{n(n-1)}{2}$ equations of type (\ref{b}) and $n$ equations of type (\ref{c}), giving us a total of
$\frac{n(n+1)}{2}$ equations. Since the RHS of (\ref{b}) and (\ref{c}) are known, let us denote them as $\phi_{ik}$ and
$\theta_i$ respectively. It can be easily seen that for each pair $i,k$ 
\begin{align}
\frac{\partial \phi_{ik}}{\partial t}-\frac{\partial \theta_i}{\partial x_k}+\frac{\partial \theta_k}{\partial x_i}\equiv0
\end{align}
since partial derivatives involving $G_0$ cancel each other and the remaining part is zero by (\ref{cyclefinal}). As
shown by the theorem given in Appendix (\ref{appendixb}), this system of partial differential equations is guaranteed to
have a solution. Thus we have proved that the function $H_0$ and $H_i$ exist and therefore our hypothetical Lagrangian
actually exists. We have finally shown that,  given set of equations of motion, the Helmholtz conditions are sufficient
for the existence of a Lagrangian.  

We further observe that Euler Lagrange equation remains invariant if we add a total derivative of a function 
$f(t,x)$. Hence, for $n$ dimensional system (\ref{functionform}), the most general form of Lagrangian that exists is given as
\begin{equation} 
L = G_0(t,x,x') + \sum\limits_{i=1}^{n} H_i (t,x)x'_i + H_0(t,x) + \frac{d f(t,x)}{dt} 
\end{equation}

\section{1-D Case} \label{examples}

In this section we will apply the mathematical arguments developed hitherto to one dimensional classical systems. 
We notice that for 1-D systems, when the index $i$ takes only one value, the conditions {\bf H1} and {\bf H2} 
(\ref{helm1} and \ref{helm2}) are identically satisfied. Thus, it is sufficient to check only the third  Helmholtz 
condition {\bf H3} (\ref{helm3}). 

As an example let us first consider a simple harmonic oscillator. The equation of motion for simple harmonic oscillator, 
is given by
\begin{equation}\label{SHM}
F(t,x,x',x'')\equiv x''+\omega^2 x=0
\end{equation} 
We observe that this equation of motion satisfies all Helmholtz conditions (\ref{helm1}-\ref{helm3}). This guarantees the existence of 
Lagrangian which we know is given as
\begin{equation}
L(t,x,x')=\left[\frac{x'{}^2}{2} -\frac{\omega^2 x^2}{2}\right] 
\end{equation}

A more interesting case to study is a damped harmonic oscillator whose equation of motion is given by
\begin{equation}\label{DHM}
F(t,x,x',x'')\equiv x''+bx'+\omega^2 x=0
\end{equation}
We notice that, since $\frac{\partial F}{\partial x'} = b$ while 
$\frac{d}{d t}\left[ \frac{\partial F}{\partial x''}\right] = 0$, the equation of motion given in (\ref{DHM}) fails to satisfy 
the third Helmholtz condition (\ref{helm3}). Hence it cannot be obtained from any Lagrangian. 

Lagrangian formulation of damped harmonic oscillator was first studied by \cite{bateman}. 
One of the primary motivations for the continued interest in such systems  (see \cite{dekker} and references therein
as well as more recently \cite{baldoitti}) is that these systems provide toy models to study quantum mechanics of dissipative
systems. The usual routes to quantization start from a Lagrangian or Hamiltonian. Hence if this system does not
admit a Lagrangian formulation, the already difficult problem of dissipative quantum systems becomes even more
difficult. 

So to obtain suitable candidate for Lagrangian of the system \cite{leonardo} 
we modify the equation of motion with a multiplicative factor. Explicitly, instead of (\ref{DHM}), let us start
with an equation of motion given by
\begin{equation}\label{mDHM}
F(t,x,x',x'')\equiv \Lambda(t,x,x')(x''+ bx' + \omega^2 x) \equiv  \Lambda(t,x,x')(x'' + G(t,x,x'))=0
\end{equation}
where $\Lambda(t,x,x')$ is  known as Jacobi last multiplier. Comparing with eq (\ref{functionform}) we see that 
$Q_{ij} \Rightarrow \Lambda$ and $P_i \Rightarrow \Lambda G$ 

Now the the third  Helmholtz condition {\bf H3} (\ref{helm3}) becomes
\begin{align}
\frac{\partial \Lambda}{\partial x'}\left( x'' + G \right) + \Lambda\frac{\partial G}{\partial x'} =&\frac{d\Lambda}{dt}
\label{H3dho}
\end{align}
Since, $x''=-G$ from (\ref{mDHM}) hence, (\ref{H3dho}) can be rewritten as,
\begin{align}
&\frac{d \Lambda}{d t} = \Lambda\frac{\partial G}{\partial x'} \nonumber
\intertext{giving,}
&\Lambda(t,x,x') = \exp{\int\frac {\partial G(t,x,x')}{\partial x'}\mathrm{d}t}  \nonumber
\intertext{Which for the case of damped harmonic oscillator is given as}
&\Lambda(t,x,x')=e^{bt}
\end{align}
Note that this $\Lambda$ satisfies the third Helmholtz condition (\ref{H3dho}).

It can be easily verified that one possible Lagrangian whose Euler Lagrange's equation results in with (\ref{mDHM}) is given as
\begin{equation}\label{LDHO}
L(t,x,x')=e^{bt}\left[\frac{x'{}^2}{2} -\frac{\omega^2 x^2}{2}\right]
\end{equation}
We observe that in the limit $b \rightarrow 0$ i.e. in the case of vanishing damping, this goes over to the standard
Lagrangian for simple harmonic oscillators. This is a standard Lagrangian for describing damped harmonic oscillators
\cite{dekker} although it should be noted that this Lagrangian is not unique. 

\section{Conclusion} \label{conclusions}

In this article, we have presented a pedagogical proof for the inverse problem of Lagrangian dynamics. We have
shown the necessity and sufficiency of Helmholtz conditions for a set of $n$ dimensional second order differential equations 
to be given by Euler Lagrange equations for some Lagrangian function. The proof is somewhat long but is simple and is
accessible to undergraduate students and we feel that it will provide a good supplementary material for undergraduate
Classical Mechanics curriculum.

We also give an illustrative example of a damped harmonic oscillator. Dissipative systems are deceptively easy to
describe and notoriously difficult to formulate, both classically and quantum mechanically and are of interest in
current research. Our example gives some insight into the problems one faces while studying such systems.

\appendix

\section{Exact Differential}\label{appendixa}

This is a standard theorem for partial differential equations (see for eg \cite{PDEbook}) which we give here for
completeness
\begin{theorem}
\label{theorem a.1} Given a differential of the form $f(x_1,x_2)dx_1+g(x_1,x_2)dx_2$ with $f(x_1,x_2)$ 
and $g(x_1,x_2)$ continuously differentiable with continuous first partial derivatives on a simply connected 
open subset D of $\mathbb{R}^2$ then a potential function F such that $dF=f(x_1,x_2)dx_1+g(x_1,x_2)dx_2$ exists iff 
\begin{align}\label{exact}
\frac{\partial f}{\partial x_2}=\frac{\partial g}{\partial x_1}
\end{align}
\end{theorem}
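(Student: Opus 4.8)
The plan is to prove the two implications separately. The forward direction (necessity of (\ref{exact})) is immediate from equality of mixed partial derivatives, while the reverse direction (sufficiency) requires constructing the potential $F$ explicitly.

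For necessity, suppose a potential $F$ exists with $dF = f\,dx_1 + g\,dx_2$. Unwinding the definition of the differential, this says $\partial F/\partial x_1 = f$ and $\partial F/\partial x_2 = g$. Since $f$ and $g$ are continuously differentiable by hypothesis, $F$ is twice continuously differentiable, so Clairaut's theorem on the symmetry of second derivatives applies: $\partial f/\partial x_2 = \partial^2 F/(\partial x_2 \partial x_1) = \partial^2 F/(\partial x_1 \partial x_2) = \partial g/\partial x_1$, which is precisely (\ref{exact}).

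For sufficiency, assume (\ref{exact}) holds throughout $D$. I would fix a base point $(a,b) \in D$ and, for each $(x_1,x_2) \in D$, define
\[
F(x_1,x_2) = \int_{\gamma} \bigl( f\,d\xi_1 + g\,d\xi_2 \bigr),
\]
where $\gamma$ is any piecewise-smooth curve in $D$ joining $(a,b)$ to $(x_1,x_2)$. The first step is to check that $F$ is well defined, i.e.\ that the integral does not depend on the choice of $\gamma$. Given two such curves, their concatenation (one reversed) is a closed loop in $D$; because $D$ is simply connected this loop is null-homotopic and bounds a region inside $D$, so Green's theorem rewrites the loop integral as the double integral of $\partial g/\partial \xi_1 - \partial f/\partial \xi_2$ over that region, which vanishes by (\ref{exact}). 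The second step is to verify $dF = f\,dx_1 + g\,dx_2$: evaluating the difference quotient of $F$ in the $x_1$-direction along a short horizontal segment and applying the fundamental theorem of calculus gives $\partial F/\partial x_1 = f$, and symmetrically $\partial F/\partial x_2 = g$; continuity of $f,g$ then guarantees these partials are continuous, so $F$ is differentiable with the asserted differential.

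The main obstacle is the path-independence step, since this is the only place the topological hypothesis genuinely enters. For a pedagogical account it is cleanest to first prove sufficiency on a convex domain (a disk or rectangle) via the explicit formula
\[
F(x_1,x_2) = \int_a^{x_1} f(s,b)\,ds + \int_b^{x_2} g(x_1,t)\,dt,
\]
where $\partial F/\partial x_2 = g$ is immediate and $\partial F/\partial x_1 = f$ follows by differentiating under the integral sign in the second term and substituting (\ref{exact}); one then extends to an arbitrary simply connected $D$ by patching these local potentials, the consistency of the overlaps again being forced by (\ref{exact}). All remaining verifications are routine one-variable calculus.
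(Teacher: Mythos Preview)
Your argument is correct. The necessity half matches the paper's exactly: identify $f=\partial F/\partial x_1$, $g=\partial F/\partial x_2$, and invoke equality of mixed partials.

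For sufficiency you take a different route from the paper. The paper simply integrates $\partial F/\partial x_1=f$ in $x_1$ to write $F=\int f\,dx_1+h(x_2)$, then differentiates in $x_2$ and integrates back to determine $h$; it never explicitly invokes path independence or the simply connected hypothesis. Your primary construction instead defines $F$ as a line integral from a fixed base point and uses Green's theorem plus simple connectedness to secure well-definedness. This is more rigorous and is the only place the topological assumption is genuinely exercised, which the paper's informal antiderivative manipulation glosses over. Your fallback ``pedagogical'' formula on a convex domain is essentially what the paper does, just phrased as integration along an L-shaped path rather than as successive indefinite integrals. So your proof subsumes the paper's and additionally explains why the hypothesis on $D$ is there.
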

\begin{proof}
In order to proof the necessity of (\ref{exact}) let us evaluate the differential $dF$ as 
\begin{align}\label{potential}
dF(x_1,x_2)=\frac{\partial F}{\partial x_1}dx_1 + \frac{\partial F}{\partial x_2}dx_2
\end{align}
i.e., $f(x_1,x_2)=\frac{\partial F}{\partial x_1}$ and $g(x_1,x_2)=\frac{\partial F}{\partial x_2}$. 
Since, the second partial derivatives of $F(x_1,x_2)$ commute due to continuity of first partial derivatives of $f$ and $g$, we have 
\begin{align}\label{commute}
\frac{\partial^2 F}{\partial x_1\partial x_2}= \frac{\partial^2 F}{\partial x_2\partial x_1}
\end{align}
Hence, proving the necessity of (\ref{exact}).

In order to prove the sufficiency of (\ref{exact}), we integrate  $\frac{\partial F}{\partial x_1}=f(x_1,x_2)$ with respect to $x_1$ to get
\begin{align}\label{findpot1}
F(x_1,x_2)=\int f(x_1,x_2)dx_1  + h(x_2) 
\end{align}
where, $h(x_2)$ is constant of integration. 
To obtain $h(x_2)$ we differentiate (\ref{findpot1}) partially with respect to $x_2$, use $\frac{\partial F}{\partial x_2}=g(x_1,x_2)$ 
and finally integrate back with respect to $x_2$. This gives
\begin{align}\label{findpot2}
h(x_2)=\int g(x_1,x_2)dx_2-\int f(x_1,x_2)dx_1 + k
\end{align}
where $k$ is a numerical constant.
We have completely determined the potential function $F(x_1,x_2)$ thus proving the sufficiency. \\
\end{proof}

This theorem can be easily extended to $n$ dimensions. Consider a differential given by
$\sum\limits_{i=1}^{n}f_i(x)dx_i$ where $x$ now stands for n variables. Suppose that the set of functions $f_i(x)$ 
are continuously differentiable on a simply connected open subset D of $\mathbb{R}^n$ and satisfy the following condition 
identically for all $i,j$
\begin{align}\label{mixed}
\frac{\partial f_i}{\partial x_j}=\frac{\partial f_j}{\partial x_i}
\end{align}
Considering a particular pair $i=p,j=q$, we observe that the expression $f_p(x)dx_p + f_q(x)dx_q$ is exact due to (\ref{mixed}) and hence the potential function $F_{pq}$ exists for it. Now consider, the pairs $i=p,j=r$ and $i=r,j=q$, by the same argument there exist potential functions $F_{pr}$ and $F_{qr}$ respectively. Hence, we have
\begin{align}\label{potential2d}
dF_{pq}=&f_p(x)dx_p + f_q(x)dx_q \nonumber\\
dF_{pr}=&f_p(x)dx_p + f_r(x)dx_r \nonumber\\
dF_{qr}=&f_q(x)dx_q + f_r(x)dx_r 
\end{align}
Adding the last three equation (\ref{potential2d}) and taking $\frac{1}{2}\left[dF_{pq}+dF_{pr}+dF_{qr}\right]=dF_{pqr}$ we have 
\begin{align}\label{potential3d}
dF_{pqr}(x)=f_p(x)dx_p + f_q(x)dx_q + f_r(x)dx_r\nonumber\\
\end{align}
Thus, we can generalize this procedure to n dimensions such that
\begin{eqnarray}\label{potential3dn}
dG(x)=\sum\limits_{i=1}^{n}f_i(x)dx_i  \hspace{3em} \mbox{where} ~ ~ ~ ~
dG(x)=\frac{1}{2}\sum\limits_{\substack{i,j=1\\i < j}}^{n}dF_{ij}
\end{eqnarray}

\section{Existence Theorem}\label{appendixb}

\begin{theorem} \label{theorem b.1} 
Consider a system of differential equations in $\textbf{2+1}$ dimensions $(x_1,x_2,t)$ that satisfies the relations
\begin{align}\label{exist}
\frac{\partial H_1}{\partial x_2} - \frac{\partial H_2}{\partial x_1}=&\phi_{12} \nonumber\\
\frac{\partial H_1}{\partial t} - \frac{\partial H_0}{\partial x_1}=&\theta_1  \nonumber\\
\frac{\partial H_2}{\partial t} - \frac{\partial H_0}{\partial x_2}=&\theta_{2}
\end{align}
where, $H_1,H_2,H_0,\phi_{12},\theta_1,\theta_2$ are all functions of $(x_1,x_2,t)$ and $\phi_{12}=-\phi_{21}$.
The necessary and sufficient condition for the existence of $\textbf{2+1}$ functions $H_1,H_2,H_0$ satisfying the above equation is that the right hand side of (\ref{exist}) must satisfy the following relation:
\begin{align}\label{existcondt}
\frac{\partial \phi_{12}}{\partial t} - \frac{\partial \theta_1}{\partial x_2}+\frac{\partial \theta_2}{\partial x_1}=0
\end{align}
\end{theorem}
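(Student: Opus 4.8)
The plan is to read the system (\ref{exist}) as a Poincar\'e-lemma-type statement and to prove it by explicit repeated integration, in the same spirit as Appendix~\ref{appendixa} (taking the data $\phi_{12},\theta_1,\theta_2$ to be continuously differentiable on a product-shaped domain so that the iterated integrals below are well defined). Necessity is immediate: if functions $H_1,H_2,H_0$ satisfying (\ref{exist}) exist, then substituting the definitions of $\phi_{12},\theta_1,\theta_2$ into the left-hand side of (\ref{existcondt}) and using equality of mixed second partial derivatives makes the terms cancel in three pairs, so (\ref{existcondt}) holds identically. The substance of the theorem is therefore sufficiency, which I would prove constructively by exhausting the available freedom one equation at a time.

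First I would dispose of the purely spatial equation $\partial H_1/\partial x_2 - \partial H_2/\partial x_1 = \phi_{12}$ by the simplest possible choice: set $H_2\equiv 0$ and
\[
H_1(x_1,x_2,t)=\int_{x_{2,0}}^{x_2}\phi_{12}(x_1,s,t)\,ds + a(x_1,t),
\]
with $a$ a function still to be fixed (here $t$ merely rides along as a parameter). With $H_2\equiv 0$ the third equation $\partial H_2/\partial t - \partial H_0/\partial x_2 = \theta_2$ collapses to $\partial H_0/\partial x_2 = -\theta_2$, solved by
\[
H_0(x_1,x_2,t)=-\int_{x_{2,0}}^{x_2}\theta_2(x_1,s,t)\,ds + b(x_1,t),
\]
with $b$ again undetermined.

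The only surviving equation is the middle one, $\partial H_1/\partial t - \partial H_0/\partial x_1 = \theta_1$. Differentiating the two displayed expressions and collecting terms, its left-hand side becomes
\[
\int_{x_{2,0}}^{x_2}\Big[\frac{\partial \phi_{12}}{\partial t}+\frac{\partial \theta_2}{\partial x_1}\Big](x_1,s,t)\,ds + \frac{\partial a}{\partial t}-\frac{\partial b}{\partial x_1}.
\]
This is exactly where the compatibility condition is used: (\ref{existcondt}) says $\partial\phi_{12}/\partial t + \partial\theta_2/\partial x_1 = \partial\theta_1/\partial x_2$, so the integrand is a perfect $x_2$-derivative and the integral equals $\theta_1(x_1,x_2,t)-\theta_1(x_1,x_{2,0},t)$. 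The middle equation thus reduces to the single scalar requirement $\partial a/\partial t - \partial b/\partial x_1 = \theta_1(x_1,x_{2,0},t)$ on the two free functions $a(x_1,t)$ and $b(x_1,t)$, which is satisfied trivially, e.g. by $b\equiv 0$ and $a(x_1,t)=\int_{t_0}^{t}\theta_1(x_1,x_{2,0},\tau)\,d\tau$. This displays $H_1,H_2,H_0$ explicitly and completes the proof.

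The only delicate point --- and the closest thing to an obstacle --- is the bookkeeping: differentiating under the integral sign and checking that the leftover boundary term $\theta_1(x_1,x_{2,0},t)$ genuinely depends on $x_1$ and $t$ alone (it does, since $x_{2,0}$ is a fixed base point), so that it can be absorbed into $\partial a/\partial t - \partial b/\partial x_1$. I would close, as in the remark following Appendix~\ref{appendixa}, by noting that the same pairing device used there lets one iterate this construction to the full $n$-dimensional system (\ref{b})--(\ref{c}) that actually appears in the sufficiency argument, with the required compatibility conditions furnished by (\ref{cyclefinal}).
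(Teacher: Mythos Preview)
Your proof is correct, but it follows a genuinely different route from the paper's own argument. The paper recasts the three equations in (\ref{exist}) as a three-dimensional curl equation: with the identification $\vec{A}=(H_1,H_2,H_0)$, $\vec{B}=(-\theta_2,\theta_1,-\phi_{12})$ and $\vec{\nabla}=(\partial_{x_1},\partial_{x_2},\partial_t)$, the system becomes $\vec{\nabla}\times\vec{A}=\vec{B}$, and the compatibility condition (\ref{existcondt}) becomes $\vec{\nabla}\cdot\vec{B}=0$. Existence of $\vec{A}$ is then argued by imposing the Coulomb gauge $\vec{\nabla}\cdot\vec{A}=0$, reducing to the vector Poisson equation $\vec{\nabla}^2\vec{A}=-\vec{\nabla}\times\vec{B}$, and quoting its integral solution. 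Your approach instead performs the Poincar\'e-lemma construction by hand: you fix the gauge by setting $H_2\equiv 0$, integrate the first and third equations in $x_2$, and then use (\ref{existcondt}) to collapse the remaining middle equation to a trivial condition on two leftover functions of $(x_1,t)$. The bookkeeping you flag is correct: the boundary term $\theta_1(x_1,x_{2,0},t)$ depends only on $(x_1,t)$ and is absorbed exactly as you say.

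What each approach buys: the paper's version is appealing to a physics audience because it maps the problem onto the familiar statement that a divergence-free field has a vector potential, though it tacitly invokes the Green's-function machinery for Poisson's equation and suitable decay or boundary conditions. Your version is more elementary and fully self-contained; it needs only differentiation under the integral sign on a box-shaped domain and produces explicit closed-form $H_0,H_1,H_2$. It also generalizes more directly to the $n$-dimensional case of Theorem~\ref{theorem b.2}, essentially by the same one-variable-at-a-time reduction the paper carries out there.
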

\begin{proof}
The necessity of condition (\ref{existcondt}) is evident as it is identically satisfied by LHS of (\ref{exist}). 
In order to proof the sufficiency of (\ref{exact}), let us make the following substitution:
\begin{align}\label{substitute}
A_x=&H_1 ~~ A_y=H_2 ~~ A_z=H_0 \nonumber\\
B_x=&-\theta_2 ~~ B_y=\theta_1 ~~ B_z=-\phi_{12} \nonumber\\ 
\vec{\nabla}=&\frac{\partial}{\partial x_1} + \frac{\partial}{\partial x_2} + \frac{\partial}{\partial t}
\end{align}
On making above substitution, we observe that equation (\ref{exist}) reduces to $\vec{\nabla} \times \vec{A}=\vec{B}$ and 
equation (\ref{existcondt}) reduces to $\vec{\nabla}.\vec{B}=0$. Since, under these conditions $\vec{A}$ always has a
solution which is undetermined upto gradient of a function $f(x_1,x_2,t)$, we can solve for $\vec{A}$ by imposing condition 
analogous to Coulomb gauge of electrodynamics  $\vec{\nabla}.\vec{A}=0$ \cite{griffith}
\begin{eqnarray}
\vec{\nabla} \times (\vec{\nabla} \times \vec{A}) &=&  \vec{\nabla} \times \vec{B} \nonumber\\
\implies  ~ ~ ~ \vec{\nabla}^2\vec{A} &=&-\vec{\nabla} \times \vec{B} \label{solve}
\end{eqnarray} 
The last equation of (\ref{solve}) gives us three Poisson's equation for each of the components of $\vec{A}$ whose
solution, in appropriate units, is given as:
\begin{align}
\vec{A}(\vec{r})=\frac{1}{4\pi}\int\limits_{\vec{r}_0}^{\vec{r}}\frac{\vec{\nabla}' \times \vec{B} (\vec{r'})}{|\vec{r}-\vec{r}'|}d^3r'
\end{align}
where, $\vec{r}=(x_1,x_2,t)$
Hence, we have showed that system (\ref{exist}) always has a solution. \\
\end{proof}

We will now extend this proof for n-dimensional case as showed in \cite{davis2}.
\begin{theorem} \label{theorem b.2} 
Consider a system of differential equations in $\textbf{n+1}$ dimensions $(x_i,t)$ that satisfies the relations
\begin{align}\label{existn}
\frac{\partial H_i}{\partial x_j} - \frac{\partial H_j}{\partial x_i}=&\phi_{ij} \nonumber\\
\frac{\partial H_i}{\partial t} - \frac{\partial H_0}{\partial x_i}=&\theta_i
\end{align}
where, $H_i,H_0,\phi_{ij},\theta_i$ are all functions of $(x_i,t)$, $i,j$ running from $1$ to $n$ and $\phi_{ij}=-\phi_{ji}$.
The necessary and sufficient condition for the existence of $\textbf{n+1}$ functions $H_i,H_0$ satisfying the above equation 
is that the right hand side of (\ref{existn}) must satisfy the following relation:
\begin{align}\label{existcondtn}
\frac{\partial \phi_{ij}}{\partial t} - \frac{\partial \theta_i}{\partial x_j}+\frac{\partial \theta_j}{\partial x_i}=0
\end{align}
for all pairs of $i,j$. 
\end{theorem}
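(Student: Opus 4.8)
The plan is to read (\ref{existn}) as a single exterior-differential equation, in direct analogy with the reformulation $\vec\nabla\times\vec A=\vec B$, $\vec\nabla\cdot\vec B=0$ used for Theorem \ref{theorem b.1}, and then appeal to the Poincar\'e lemma. On the coordinate patch with coordinates $(x_1,\dots,x_n,t)$ I would introduce the $1$-form $\omega=\sum_{i=1}^{n}H_i\,dx_i+H_0\,dt$ and the $2$-form $\Omega=-\sum_{i<j}\phi_{ij}\,dx_i\wedge dx_j-\sum_i\theta_i\,dx_i\wedge dt$. Expanding $d\omega$ and collecting coefficients shows that the system (\ref{existn}) is \emph{exactly} the statement $d\omega=\Omega$: the $dx_i\wedge dx_j$ coefficient of $d\omega$ is $\partial H_j/\partial x_i-\partial H_i/\partial x_j=-\phi_{ij}$, reproducing the first line of (\ref{existn}), and the $dx_i\wedge dt$ coefficient is $\partial H_0/\partial x_i-\partial H_i/\partial t=-\theta_i$, reproducing the second. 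Thus the question becomes: for which $\Omega$ does a primitive $\omega$ exist?

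Necessity is then immediate, since if $\omega$ exists then $\Omega=d\omega$ is closed, and the vanishing of the $dx_i\wedge dx_j\wedge dt$ component of $d\Omega=0$ is precisely (\ref{existcondtn}). For sufficiency I would first check $d\Omega=0$ and then produce a primitive. The cleanest route is the Poincar\'e lemma on a star-shaped (contractible) open set $D$, which hands back an $\omega$ with $d\omega=\Omega$, unique up to $d f(x_1,\dots,x_n,t)$, exactly as in Theorem \ref{theorem b.1}. To keep the argument elementary and self-contained one can instead build $\omega$ by the iterated-integration scheme already used in Appendix \ref{appendixa}: integrate one variable at a time, at each stage fixing the new "constant of integration'' by imposing the next equation of (\ref{existn}), with the closedness of $\Omega$ guaranteeing consistency at every step; alternatively one may induct on $n$, the base cases $n=1$ (a single equation for two functions, trivially solvable) and $n=2$ (Theorem \ref{theorem b.1}) being already in hand.

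The point that needs care — and where I expect the real work to sit — is that "$\Omega$ closed'' involves more than (\ref{existcondtn}) once $n\ge 3$: the $dx_i\wedge dx_j\wedge dx_k$ component of $d\Omega=0$ is the purely spatial cyclic identity $\partial_k\phi_{ij}+\partial_i\phi_{jk}+\partial_j\phi_{ki}=0$. For the statement to go through verbatim this identity must either be added to the hypotheses or, as happens in the application to (\ref{b})–(\ref{c}), follow from the structure of the $\phi_{ij}$ at hand — there the combination built from $G_0$ and $P_i$ inherits it from the form (\ref{functionform}) of the equations of motion together with (\ref{cyclefinal}). Once the full closedness of $\Omega$ is on the table the existence argument is routine bookkeeping; the substantive tasks are verifying that the integrability conditions quoted are the complete set, and that the domain hypothesis is used correctly (simple connectedness suffices for $n=2$, but contractibility is the clean assumption for the higher-degree $\Omega$).
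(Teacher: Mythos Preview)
Your route via exterior calculus is genuinely different from the paper's. The paper proceeds constructively: it first invokes Theorem~\ref{theorem b.1} for the pair $(i,j)=(1,2)$ to produce $H_1,H_2,H_0$, then for each remaining $j$ defines $H_j=\int\bigl[\theta_j+\partial H_0/\partial x_j\bigr]\,dt'$ by integrating the second line of (\ref{existn}) in $t$, and finally checks by direct differentiation (using (\ref{existcondtn})) that this $H_j$ satisfies $\partial H_j/\partial x_1-\partial H_1/\partial x_j=\phi_{j1}$. Your reformulation $d\omega=\Omega$ is cleaner, makes the geometry transparent, and delivers the solution up to a gauge $df$ in one stroke via the Poincar\'e lemma; the paper's argument is more hands-on but stays within elementary multivariable calculus, in keeping with the pedagogical aim.

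Your caveat about the purely spatial cyclic identity $\partial_k\phi_{ij}+\partial_i\phi_{jk}+\partial_j\phi_{ki}=0$ for $n\ge 3$ is well taken and in fact exposes the same lacuna in the paper's own proof: the consistency check there is carried out only against $H_1$ (with the $H_2$ case asserted ``similarly''), and the relations $\partial H_j/\partial x_k-\partial H_k/\partial x_j=\phi_{jk}$ for $j,k\ge 3$ are never verified --- doing so would require precisely the spatial condition you isolate. So your diagnosis that (\ref{existcondtn}) alone does not suffice once $n\ge 3$, and that the missing identity must either be added as a hypothesis or be inherited from the specific $\phi_{ij}$ arising in the application, is a genuine sharpening of both the statement and the argument.
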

\begin{proof} Let us begin by considering specific values for $i,j$ say $i=1, j=2$ and solve for the resulting system using 
(\ref{theorem b.1}). Say after solving we obtain functions $H_1,H_2,H_0$. Now consider $i=1,j\neq1$ and solve for $H_j$ by 
integrating second equation of (\ref{existn}).
\begin{align}\label{soln}
H_j(t,x)=\int\left[\theta_j(t',x)+\frac{\partial H_0}{\partial x_j}(t',x)\right]dt'
\end{align}
where, as before, $x$ is a symbol for all $x_i's$ and the constant function of integration is chosen to be $0$. 
Let us now check the consistency of our solution i.e we need to check whether the $H_j$ we obtain above satisfies the system (\ref{existn}).

On differentiating (\ref{soln}) partially with respect to $x_1$, we get
\begin{align}\label{consistencycheck1}
\frac{\partial H_j}{\partial x_1}(t,x)=&\int\left[\frac{\partial \theta_j}{\partial x_1}(t',x)+\frac{\partial^2 H_0}{\partial x_j\partial x_1}(t',x)\right]dt' 
\nonumber\\
=&\int\left[\frac{\partial \theta_j}{\partial x_1}+\frac{\partial}{\partial x_j}\left\{\frac{\partial H_1}
{\partial t'}-\theta_1\right\}\right]dt' \nonumber\\
=&\int\left[\frac{\partial \phi_{j1}}{\partial t'}+\frac{\partial^2 H_1}{\partial t'\partial x_j}\right]dt' \nonumber\\
=&\phi_{j1} + \frac{\partial H_1}{\partial x_j}
\end{align}
where, we have used (\ref{existn}) and (\ref{existcondtn}) in second and third step respectively.

Hence, we see that the solution (\ref{soln}) consistently satisfies the system of equations (\ref{existn}) for $i=1,j$. Similarly, we
can verify that the solution is consistent for $i=2,j$. Since, j is arbitrary we conclude that the entire system can be
solved consistently.
\end{proof}


\begin{thebibliography}{10}
%
\bibitem{davis1}
  David R. Davis, 
  ``The Inverse Problem of the Calculus of Variations in Higher Space'', 
   Trans. Amer. Math. Soc. \textbf{30}, 710–736 (1928).
%
\bibitem{davis2}
  David R. Davis, 
  ``The Inverse Problem of the Calculus of Variations in a Space of (n+1) Dimensions'',
  Bull. Amer. Math. Soc. \textbf{35}{(3)}, 371-380 (1929).
%
\bibitem{jessedouglas}
  J. Douglas,
  ``Solution of the Inverse Problem of the Calculus of Variations'', 
Trans. Amer. Math. Soc. \textbf{50}, 71-128 (1941).
%
\bibitem{leonardo}
  Leonardo Casetta, Celso P. Pesce,
  ``The inverse problem of Lagrangian mechanics for Meshchersky’s equation'', 
  Acta Mechanica \textbf{225}(6), 1607-1623 (2013). 
%
\bibitem{crampin}
  M. Crampin, T. Mestdag and W. Sarlet,
  On the generalized Helmholtz conditions for Lagrangian systems with dissipative forces, 
   arXiv:1003.1840v1 \textbf{[math.DG]}
  (2010) 
%
\bibitem{bateman}
  Bateman,
  ``On Dissipative Systems and Related Variational Principles'', Phys. Rev. \textbf{38}(4), 815–819 (1931).
%
\bibitem{dekker}
H. Dekker, ``Classical and Quantum Mechanics of the Damped Harmonic Oscillator'',
Physics Reports \textbf{80}(10), 1-112 (1981).
%
\bibitem{baldoitti}
M.C. Baldiotti, R. Fresneda, D.M. Gitman, ``Quantization of the Damped Harmonic Oscillator Revisited'', 
Phys. Lett. A \textbf{375}(15), 1630-1636 (2011).
%
\bibitem{PDEbook}
Dennis G. Zill, ``A First Course in Differential Equations with Modeling Applications'',
Cengage Learning (2012).
%
\bibitem{griffith}
David J Griffiths, ``Introduction to Electrodynamics'',
Addison Wesley (1999).
%
\end{thebibliography}
\end{document}